\documentclass[10pt, journal]{IEEEtran}
\usepackage{cite}
\usepackage{amsmath,amssymb,amsfonts}
\usepackage{graphicx}
\usepackage{algorithm}
\usepackage[hidelinks]{hyperref}
\usepackage{textcomp}

\usepackage[utf8]{inputenc} 
\usepackage[T1]{fontenc}    
\usepackage{amsmath,amssymb,amsfonts}
\usepackage{amsthm}
\usepackage{algpseudocode}
\usepackage{todonotes}
\usepackage{multirow}
\usepackage{url}
\usepackage{cite}
\usepackage{textcomp}
\usepackage{xcolor}
\usepackage{optidef}
\usepackage{arydshln}
\usepackage{balance}
\usepackage{tikz}
\usepackage{circuitikz}
\usetikzlibrary{arrows.meta}
\tikzstyle{block} = [draw, rectangle, 
    minimum height=3em, minimum width=3em]
\tikzstyle{sum} = [draw, circle, node distance=1cm]
\tikzstyle{input} = [coordinate]
\tikzstyle{output} = [coordinate]
\tikzstyle{pinstyle} = [pin edge={to-,thin,black}]

\graphicspath{{./figures/}}

\usepackage{bbm}

\newtheorem{theorem}{Theorem}

\newtheorem{remark}{Remark}
\newtheorem{definition}{Definition}
\newtheorem{exmp}{Example}

\newcommand{\BBM}{\begin{bmatrix}}
\newcommand{\EBM}{\end{bmatrix}}
\newcommand{\BEQ}{\begin{equation}}
\newcommand{\EEQ}{\end{equation}}
\newcommand{\BIT}{\begin{itemize}}
\newcommand{\EIT}{\end{itemize}}

\newcommand{\complex}{\mathbb{C}}

\DeclareMathOperator{\rank}{rank}

\DeclareMathOperator{\Expect}{\mathbb{E}}

\newcommand{\E}[2][]{\Expect_{#1}\!\left[\,#2\,\right]}   

\begin{document}
\title{Informativity and Identifiability for Identification of Networks of Dynamical Systems}
\author{Anders Hansson, \IEEEmembership{Senior Member, IEEE}, João Victor Galvão da Mata, and Martin S.~Andersen
\thanks{This work was supported by ELLIIT, and by the Novo Nordisk Foundation under grant number NNF20OC0061894.}
\thanks{Anders Hansson is with the Department of Electrical Engineering, Linköping University (e-mail: anders.g.hansson@liu.se).}
\thanks{João Victor Galvão da Mata and Martin S.~Andersen are with the Department of Applied Mathematics and Computer Science, Technical University of Denmark (e-mail: jogal@dtu.dk; mskan@dtu.dk).}}
\maketitle
\begin{abstract}
In this paper, we show how informativity and identifiability for networks of dynamical systems can be investigated using Gröbner bases. We provide a sufficient condition for generic informativity in terms of
positive definiteness of the spectrum of external signals and full generic row rank of the transfer function relating the external signals to the inputs of the predictor. Moreover, we show how generic local network identifiability in the algebraic sense can
be investigated by computing the dimension of the
generic fiber associated with the closed loop transfer function from external
measurable signals to the measured outputs.  
\end{abstract}
\begin{IEEEkeywords}
System Identification, Networks, Dynamical Systems, Informativity, Generic local network identifiability.
\end{IEEEkeywords}
\section{Introduction}
\IEEEPARstart{S}{ystem} identification for networks of dynamical systems is an active research 
field with a 
long history. Some early references 
are \cite{VANDENHOF19931523,VANDENHOF20132994,hen+gev+baz19}. Some of the work has been focusing on identifying
the full network, e.g. \cite{FONKEN2022110295,hen+gev+baz19,WEERTS2018256}, while other work has been focusing on identifying parts of the network,
a so-called {\it sub-network}, e.g. 
\cite{VANDENHOF20132994,Dankers_predictor_input,Materassi2019SignalSF,7402842,Linder03042017,9247487,9661422}.
The latter is an example of system 
identification under feedback, a topic that also has a long history, see e.g. \cite{FORSSELL19991215}.

Many system identification methods are based on a so-called {\it predictor}, and the system identification 
methods using predictors are called {\it prediction error methods}. 
Two fundamental questions related to these methods are what is called {\it informativity} and {\it identifiability}. 
The first question is about if the signals used for system identification are such that the predictor
can be uniquely determined. The second question is whether the network topology is such that the open loop transfer
functions can be uniquely determined from the predictor. This paper will focus on these questions under the 
assumption that a stable predictor exists. We do not care if this predictor is linear or nonlinear in the 
open loop transfer functions. We will investigate these questions both for the full network case and 
for the sub-network case. We will see that we can handle more general network configurations than what has been 
possible hitherto in the literature. Specifically, we will consider the {\it partial measurement case}, i.e. not 
all of the node signals in the (sub)-network are in the measured outputs. Moreover, we will allow the measurement equation 
to contain transfer functions to be identified, and we will also allow for direct feed-through from the inputs to the
measured outputs. In a seamless way, we also allow for some transfer functions to be known and for rational 
constraints relating 
transfer functions to one another. 

Regarding informativity, we focus on sufficient conditions for generic informativity in terms of
positive definiteness of the spectrum of external signals and full generic row rank of the transfer function relating the
external signals to the inputs of the predictor. This has previously been studied in e.g. 
\cite{VANDENHOF20132994,BOMBOIS2023110742}. We specifically generalize this work to the partial measurement case,  and to some of the transfer functions to be known.
We give a graph-theoretic characterization of the generic rank in the free-entry setting,
and an algebraic geometry characterization that also accommodates known transfer functions and constraints.

Moreover, we will derive conditions for generic local
network identifiability in the algebraic sense in terms of the dimensions of the admissible domain
and the closure of its image. This is a generalization of the work in
\cite{hen+gev+baz19,SHI2022110093} and complements the work in 
\cite{leg25}. Specifically, we are able to consider some of the transfer functions to be known, and
constraints on transfer functions. We are also able to allow for the measurement equation 
to contain transfer functions to be identified.
\section{Model of Network}
We consider a linear dynamic network model defined by 
\BEQ\label{eqn:PMD-model}
\begin{aligned}
P(q)w_k&=Q(q)u_k\\
z_k&=R(q)w_k+S(q)u_k
\end{aligned}
\EEQ
where the quadruple $(P,Q,R,S)$ are transfer function matrices
in the forward shift operator $q$. The dimensions of these transfer function matrices are $n\times n$, 
$n\times m$, $p\times n$, and $p\times m$, respectively. This also defines the dimensions of all the 
signals, where $u_k$ is referred to as the input signal, $z_k$ as the output signal, and $w_k$ as the node signal. We assume that $P$ is invertible for almost all $q$
such that the closed loop system is {\it well-posed}.
We may sometimes replace $q$ with the complex variable $z$. 
We  assume that some of the elements of the transfer functions
we have defined are dependent on some parameter $\theta$, which 
is going to be identified. 
If we want to emphasize the parameter dependence we may e.g.
write $P(q;\theta)$ for transfer functions and $y_k(\theta)$ for signals. We will often drop parameter dependence
and dependence on $q$. 
\section{Innovation form and Predictor}\label{sec:innovation}
We divide $u_k$ in a part $r_k$ that we know or measure,  and another part $e_k$, which we refer to as noise. 
We write $u_k=(r_k,e_k)$, where $r_k$ has $m_r$ components and $e_k$
has $m_e$ components. We also partition $Q(q)$ and $S(q)$ conformably as 
$$Q(q)=\BBM Q_r(q)&Q_e(q)\EBM,\quad S(q)=\BBM S_r(q)&S_e(q)\EBM.$$
We assume that 
$z_k$ can be partitioned as $z_k=(y_k,r_k)$, where all of $z_k$ is a signal that
we have full information about. Here $y_k$ has $p_y$ components, and
$r_k$ has $m_r$ components, so $p=p_y+m_r$.
We conformally write
\begin{align*}\BBM R(q)&S(q)\EBM&=\BBM R(q)&S_r(q)&S_e(q)\EBM\\&=
\BBM R_y(q)&S_{yr}(q)&S_{ye}(q)\\0&I&0\EBM
\end{align*}
for appropriately defined transfer function matrices. From this, we may conclude
that the {\it closed loop system} from $u_k$ to $y_k$ may be written 
\BEQ
y_k=G_c(q)r_k+H_c(q)e_k\label{eqn:closed}
\EEQ
where
\BEQ\label{eqn:closed-tf}
\begin{aligned}
\BBM G_c(q)&H_c(q)\EBM&=R_y(q)P(q)^{-1}\BBM Q_r(q)&Q_e(q)\EBM\\&\quad+
\BBM S_{yr}(q)&S_{ye}(q)\EBM.
\end{aligned}
\EEQ
System identification is very often based on a {\it predictor}, and to this
end we define the so-called {\it innovation form}, i.e. we write
\begin{align}
\hat y_k&=G_c(q) r_k+G_o(q)\epsilon_k\label{eqn:innovation}\\
y_k&=\hat y_k+\epsilon_k\label{eqn:error}
\end{align}
for some transfer function matrix 
$G_o(q)$, where the {\it innovations} $\epsilon_k$ are zero mean
equally distributed and independent
random variables. We remark that obtaining the innovation form is by no means
trivial and often involves solving an algebraic Riccati equation, e.g.
\cite{hansson2025identifiabilitymaximumlikelihoodestimation}. However, we will only need the existence of the innovation form in
this work, and no explicit knowledge of the transfer function $G_o(q)$ is
needed.

From the above equations we realize that we may write
\BEQ
\hat y_k= W(q)z_k\label{eqn:pred}
\EEQ
where 
$$W(q)=\BBM W_y(q)&W_r(q)\EBM =(I+G_o(q))^{-1}\BBM G_o(q)&G_c(q)\EBM.$$ 
This is called the {\it predictor} for $y_k$. 
Here we need to make the assumption that $W(q)$ has all poles strictly inside
the unit circle. This is not a priori given, and has to be verified for each 
given example. 
It follows using simple manipulations of the above equations 
that  $ G_c(q)$ can be recovered from the predictor as
\BEQ\label{eqn:predtogc}
 G_c(q)=(I- W_y(q))^{-1} W_r(q).
\EEQ
We will now revisit some special cases of the above network formulation that
are popular in the literature. Most existing work assume that $P(q)=I-G(q)$ for
some transfer function matrix $G(q)$ which is assumed to have a zero diagonal. 
Most existing work also only consider $S(q)=0$.
Moreover, it is common to have $R_y(q)=I$ implying $y_k=w_k$, which
is referred to as the {\it full measurement} case. The {\it partial 
measurement} case is when $R_y(q)$ is a matrix having rows equal to 
standard basis vectors, which means that $y_k$ contains components of $w_k$.
It is also very common that $Q_r(q)$ and $Q_e(q)$ are diagonal matrices. 
Moreover, it is not uncommon that they are independent of $q$. This specifically
goes for $Q_r(q)$. Notice that we do not make any assumptions similar to the special cases above. 
\section{Informativity}
We will now investigate what is called informativity for estimation of the parameter $\theta$. 
To this end, the following definition is useful:
\begin{definition}
The signal $z_k$ is said to be {\it informative enough} if for any two parameters 
$\theta_1$ and $\theta_2$ such that the closed loop system is stable, it holds that 
\BEQ\label{eqn:informative}
\E{\left(\hat y_{k}(\theta_1)-\hat y_{k}(\theta_2)\right)
\left(\hat y_{k}(\theta_1)-\hat y_{k}(\theta_2)\right)^T}=0
\EEQ
implies that $ W\left(e^{i\omega};\theta_1\right)= W\left(e^{i\omega};\theta_2\right)$ for almost  
all $\omega$. 
\end{definition}
It holds that 
$$\hat y_k(\theta_1)
-\hat y_k(\theta_2)=\Delta W(q)z_k$$
where $\Delta W(q)= W(q;\theta_1)- W(q;\theta_2)$. 
From Parseval's formula, it follows that 
\eqref{eqn:informative} is equivalent to 
$$\Delta  W\left(e^{i\omega}\right)\Phi_z(\omega)
\Delta  W\left(e^{-i\omega}\right)^T=0$$
for almost all $\omega$, where $\Phi_z$ is the spectrum of $z_k$. Hence, a sufficient condition for the 
above equality to imply that $ W\left(e^{i\omega};\theta_1\right)= W\left(e^{i\omega};\theta_2\right)$ for almost  
all $\omega$, is that $\Phi_z(\omega)$ is positive definite for almost all 
$\omega$.

We have that the closed-loop transfer
function from $u_k=(r_k,e_k)$ to 
$z_k$ is given by
$$\Pi(q)=R(q)P(q)^{-1}Q(q)+S(q)$$
and that the spectrum of $z_k$ is 
positive definite if $\Pi\left(e^{i\omega};\theta_0\right)$ is full row rank for almost all $\omega$ and the spectrum
of $u_k$ is positive definite. 
This condition depends on the true system described
by $\theta_0$, and hence it is a difficult condition to verify. 
Let
\BEQ
M(q)=\BBM P(q)&Q(q)\\-R(q)&S(q)\EBM.
\EEQ
Similarly as
in \cite{BOMBOIS2023110742}, we will instead only investigate the so-called
{\it generic} rank defined as 
$$\rank_{g} \Pi=\max_{M}
\rank\Pi.$$
Here we consider $\Pi$ to just be a function of the free entries of the 
matrix $M$.\footnote{When no ambiguity arises we will often just refer to  $M$ even if we mean  the 
free entries of $M$.}  The only property we need regarding these free entries when we investigate generic rank
is that they are algebraically independent\footnote{Algebraically independent is the formal way of saying that the entries are free. } commuting indeterminates over the coefficient field.
The coefficient field is the field of 
rational functions of transfer functions. This is a field with the standard definition of addition and multiplication.
We will later on see that the generic rank is equal to the rank for almost all
$M$.
Since $\Pi$ is a rational function in $z$, it follows that 
$$\rank_g\Pi(z)=\max_z \rank_g\Pi(z)$$
for almost all $z\in\complex$, and hence
$$\rank \Pi(z)=\max_{z}\rank_g \Pi(z)$$
for almost all $M$ and $z$.
Using this result, it is possible to give a sufficient condition for informativity
for almost all admissible true networks.
\begin{theorem}
Under the standing assumptions on the network and the candidate predictors,
suppose that the spectrum of $u_k=(r_k,e_k)$ is positive definite for almost all
frequencies and that $\Pi$ has full generic row rank. Then, for almost all
admissible choices of the true network's free transfer functions, the resulting
signal $z_k$ is informative enough.
\end{theorem}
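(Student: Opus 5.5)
The argument is a chain of three reductions: from informativity to positive definiteness of $\Phi_z$, from that to full row rank of $\Pi$ at the true system, and from generic full row rank to full row rank for almost all true networks.

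\textbf{Step 1 (reduction to $\Phi_z$).} As shown before the theorem, Parseval's formula turns \eqref{eqn:informative} into
$$\Delta W\left(e^{i\omega}\right)\Phi_z(\omega)\Delta W\left(e^{-i\omega}\right)^T=0$$
for almost all $\omega$. Since $\Delta W$ has real coefficients, $\Delta W(e^{-i\omega})^T$ is the conjugate transpose of $\Delta W(e^{i\omega})$. Hence if $\Phi_z(\omega)\succ 0$ for almost all $\omega$, each row $x$ of $\Delta W(e^{i\omega})$ satisfies $x\Phi_z x^*=0$, which forces $\Delta W(e^{i\omega})=0$ almost everywhere. It therefore suffices to prove that $\Phi_z(\omega)$ is positive definite for almost all $\omega$, for almost all admissible true networks.

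\textbf{Step 2 (reduction to full row rank of $\Pi$).} With stable closed loop, $z_k=\Pi(q;\theta_0)u_k$, so
$$\Phi_z(\omega)=\Pi(e^{i\omega})\Phi_u(\omega)\Pi(e^{i\omega})^*.$$
If $\Phi_u(\omega)\succ 0$ and $\Pi(e^{i\omega})$ has full row rank $p$, then for any nonzero $x$ the vector $\Pi^*x$ is nonzero, so $x\Phi_z x^*>0$. The remaining task is to show that $\Pi(e^{i\omega};\theta_0)$ has full row rank for almost all $\omega$, for almost all admissible true networks.

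\textbf{Step 3 (genericity, the main obstacle).} Since $\rank_g\Pi=p$, some $p\times p$ minor $\delta$ of $\Pi$ is a nonzero element of the field of rational functions in the free entries of $M$ and in $z$. Clearing denominators (which include $\det P$), write $\delta=N/D$ with $N$ a nonzero polynomial.
\begin{itemize}
\item For almost all values of the free entries, a nonzero polynomial does not vanish identically. So for almost all true networks, $N$ is a nonzero polynomial or rational function in $z$ alone, and $D\neq 0$ since well-posedness gives $\det P\not\equiv 0$.
\item A nonzero rational function of $z$ has finitely many zeros and poles. Hence it vanishes at only finitely many points $e^{i\omega}$, so $\rank\Pi(e^{i\omega})=p$ for all but finitely many $\omega$.
\end{itemize}
This is exactly the ``$\rank\Pi(z)=\max_z\rank_g\Pi(z)$ for almost all $M$ and $z$'' statement recorded before the theorem.

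The delicate points in this step are two. First, ``almost all'' must be interpreted relative to the admissible set: stable, well-posed true networks with a stable predictor. I would argue that this set is open (or at least has nonempty interior) in the parameter space. The zero set of a nonzero polynomial has measure zero, and its complement is open and dense, so its intersection with the admissible set is still of full measure there. Second, the free entries are transfer functions, not scalars. I would parametrize them by finitely many real coefficients, so that $N$ becomes a nonzero polynomial in those coefficients and $z$. Specializing $z$ to a point where the corresponding coefficient polynomial is nonzero then gives the measure-zero exceptional set in coefficient space.

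Combining Steps 1–3, for almost all admissible true networks $\Phi_z\succ 0$ almost everywhere on the unit circle, and therefore $z_k$ is informative enough.
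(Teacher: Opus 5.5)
Your proposal is correct and follows the same chain as the paper's proof: generic full row rank gives full row rank of $\Pi(e^{i\omega};\theta_0)$ for almost all true networks and almost all $\omega$, then $\Phi_z=\Pi\Phi_u\Pi^*\succ 0$, and informativity follows from the Parseval argument that precedes the theorem. Your Step 3 spells out the genericity step that the paper only cites from its preceding discussion. The one unstated detail there is that your coefficient parametrization must keep the free entries algebraically independent over $\mathbb{C}(z)$, so that $N$ stays nonzero; this holds when each free transfer function has its own parameters and genuinely depends on them.
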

\begin{proof}
For almost all admissible choices of the true network's free transfer functions,
full generic row rank implies that $\Pi(e^{i\omega};\theta_0)$ has full row rank
for almost all $\omega$. Since
$$\Phi_z(\omega)=\Pi(e^{i\omega};\theta_0)\Phi_u(\omega)
\Pi(e^{i\omega};\theta_0)^*,$$
where $^*$ denotes the conjugate transpose, the spectrum of $z_k$ is then
positive definite for almost all frequencies. Informativity follows from the
preceding argument.
\end{proof}
Here, ``almost all'' has the same meaning as in the preceding generic-rank
discussion. This is a generic guarantee: informativity may fail for exceptional
true networks, for example when exact cancellations reduce the rank of $\Pi$.
The qualification concerns the true network generating the data; for each true
network covered by the theorem, the implication in the definition of
informativity still holds for every pair of admissible candidate predictors.
The generic rank is understood relative to the model class under consideration,
respecting any known transfer functions or imposed constraints.
\begin{remark}\label{rem:predictor-reduction}
If some components of $u_k$ are set to zero, the corresponding columns of
$M(q)$ may be removed. If some predictor columns are fixed across all candidate
models, the measurement rows of $M(q)$ corresponding to their predictor inputs
may be omitted when testing informativity for the remaining columns. Merely
being uninterested in unknown predictor columns does not justify this removal,
since changes in those columns may cancel changes in the columns of interest.
More generally, known relationships between predictor columns may allow the
predictor to be expressed using fewer, known combinations of the measured
signals. Informativity can then be investigated for this reduced predictor-input
vector.
\end{remark}
What has been presented above
is a generalization of the results in \cite{BOMBOIS2023110742} for a much 
more general network configuration.
We will now 
give conditions on the generic rank of $\Pi$ in terms of both 
a graph and in terms of so-called {\it Gröbner basis}. 
\subsection{Graphical Condition}
Since we are investigating generic properties, we drop the dependence on
the forward shift operator $q$ or the complex variable $z$.
For the graph construction, we assume that the diagonal entries of $P$ are
structurally nonzero, as for $sI-A$ in \cite{wou91}. Since $P$ is generically
invertible, this can always be achieved by a row permutation. Applying the same
permutation to $Q$ leaves $RP^{-1}Q+S$ unchanged. In this subsection, $P$, $Q$,
and $M$ refer to the matrices after any such permutation.

The following equality is easily proven by multiplying together the matrices:
\begin{align*}
&\BBM I&0\\RP^{-1}&I\EBM M\BBM I&-P^{-1}Q\\0&I\EBM\\&=
\BBM P&0\\0& RP^{-1}Q+S\EBM.
\end{align*}
From well-posedness, it holds that $P$ is invertible.
Hence, the rank of $M$ is the same as the rank of $\Pi$ plus the dimension of
$P$, which is $n$.
This is a generalization of the formula in (5) in
\cite{wou91} to the direct term
case. In that paper, it is shown for real-valued structured\footnote{
Notice that we also have structured matrices, since the transfer function matrices are often sparse.
} matrices
$(A,B,C)$ and $s\in\complex$, how the generic rank  of the transfer function
$C(sI-A)^{-1}B$ can be given a graph-theoretic characterization.
From algebraic geometry, e.g. \cite{CoxLittleOShea1997},
it follows that the proof of Theorem~1 in  \cite{wou91} still holds
true in our setting, i.e.
$$\left\{ M\mid \rank \Pi<
\rank_{g} \Pi \right\}$$
is contained in a proper affine variety of the set of all $M$. Hence it
holds that $\rank_{g}  \Pi=\rank  \Pi$ for
almost all
$M$. With the diagonal arrangement above, the proof of Theorem~2 in
\cite{wou91} extends to our setting: the diagonal entries of $P$ are retained
when off-path edges are removed. We are therefore able to derive a graph-theoretic
characterization of the generic rank of $\Pi$.

To this end, let us define a directed graph $\mathcal G=(\mathcal V,\mathcal E)$ with vertex set
$\mathcal V$ and edge set $\mathcal E$, where $\mathcal V=\mathcal U\cup \mathcal W\cup \mathcal Z$
with $\mathcal U=\{\mathcal U_1,\ldots, \mathcal U_{m}\}$, $\mathcal W=
\{\mathcal W_1,\ldots,\mathcal W_{n}\}$, $\mathcal Z=
\{\mathcal Z_1,\ldots,\mathcal Z_p\}$.
We can now
associate each row and column in $M$ with elements in $\mathcal V$. The first $n$
columns are
labeled $\mathcal W_1$ to $\mathcal W_{n}$, and the next $m$ columns
$\mathcal U_1$ to $\mathcal U_m$. We proceed similarly with the
rows. We then
define the edges in $\mathcal E$ as all the ordered two-tuples $(v_i,v_j)$ with
$v_i,v_j\in\mathcal V$ such that $M_{v_j,v_i}$ is not a structural zero.
We do however, not include any self-loop edges, i.e.\ edges for which $v_i=v_j$.
If there are vertices $v_1,\ldots, v_k$ in $\mathcal V$ such that $(v_i,v_{i+1})\in \mathcal E$
for $i=1,\ldots, k-1$, we say that there is a path in $\mathcal G$ from $v_1$ to
$v_k$. If  $v_1\in \mathcal U$ and $v_k\in \mathcal Z$ we say that there
is a path from $\mathcal U$ to $\mathcal Z$. We say that two paths from
$\mathcal U$ to $\mathcal Z$ are {\it disjoint} if they have no vertex in common.
We say that an $l$-tuple of paths from $\mathcal U$ to $\mathcal Z$ is disjoint if each pair of paths in the $l$-tuple is disjoint.

We are now able to state the following result, which is a generalization of
Theorem~2 in \cite{wou91}.
\begin{theorem}
The maximum number of disjoint paths
in $\mathcal G$ from $\mathcal U$ to $\mathcal Z$ is equal to the generic rank of $\Pi$.
\end{theorem}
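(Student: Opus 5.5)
We reduce the claim to a statement about the generic rank of $M$. By the block factorization displayed above, $\rank M=n+\rank\Pi$ whenever $P$ is invertible, which holds for almost all $M$. Combined with the fact that $\rank_g\Pi=\rank\Pi$ off a proper affine variety, this gives $\rank_g M=n+\rank_g\Pi$. It therefore suffices to show that the generic rank of $M$ equals $n$ plus the maximum number $\ell$ of disjoint $\mathcal U$--$\mathcal Z$ paths in $\mathcal G$. Since the free entries are algebraically independent, $\rank_g M$ equals the term rank of the structure matrix of $M$: the size of a maximum matching in the bipartite graph whose rows and columns are labeled by $\mathcal V$ and whose edges are the structural nonzeros (diagonal entries of $P$ included). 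This is the standard fact that a determinant of a square submatrix is a polynomial in the free entries, and it is nonzero iff some permutation term survives. Known entries need care, and we restrict to the free-entry setting as in the theorem.

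\emph{Lower bound.} Take $\ell$ vertex-disjoint paths $\mathcal U_{i_s}=v_1^s\to\cdots\to v_{k_s}^s=\mathcal Z_{j_s}$. Build a square submatrix of $M$ with rows $\mathcal W\cup\{\mathcal Z_{j_s}\}$ and columns $\mathcal W\cup\{\mathcal U_{i_s}\}$. Each path edge $(v_t,v_{t+1})$ corresponds to a structurally nonzero entry $M_{v_{t+1},v_t}$. Select these entries along every path, and for every $\mathcal W$-vertex not on any path select the diagonal entry of $P$, which is structurally nonzero after the permutation. Disjointness of the paths, together with the absence of self-loops, guarantees that each selected row and column is used exactly once. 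Hence we obtain a permutation with all entries free and nonzero. The corresponding monomial appears in the determinant with coefficient $\pm1$ and cannot cancel with any other term, because distinct permutations give distinct monomials in the indeterminates. Thus $\rank_g M\ge n+\ell$.

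\emph{Upper bound.} Take a maximum matching of size $n+r$ in the bipartite graph. First augment it so that it covers all $\mathcal W$ rows and columns: the diagonal of $P$ is itself a matching of size $n$ on the $\mathcal W$ block, so an alternating-path argument lets us assume the matching saturates all $\mathcal W$ rows and columns without reducing its size. Each matched pair (row $v_j$, column $v_i$) with $v_i\ne v_j$ is an edge $v_i\to v_j$ of $\mathcal G$; diagonal pairs are trivial. Viewed in $\mathcal G$, the non-diagonal matched pairs form a union of vertex-disjoint paths and cycles, because every vertex has in-degree and out-degree at most one. Vertices in $\mathcal U$ have no incoming matched edge, since $\mathcal U$ labels only columns. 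Vertices in $\mathcal Z$ have no outgoing matched edge, since $\mathcal Z$ labels only rows. Every $\mathcal W$ vertex is saturated as both a row and a column, so each maximal path starts in $\mathcal U$ and ends in $\mathcal Z$. The $r$ extra matched rows beyond $n$ are exactly $\mathcal Z$ rows, and each such row ends a distinct path from $\mathcal U$. This yields $r$ disjoint paths, so $r\le\ell$.

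The main obstacle is the saturation step in the upper bound. We must show that some maximum matching covers every $\mathcal W$ row and column, and this is where the structurally nonzero diagonal of $P$ is essential. The first approach to try is a Dulmage--Mendelsohn, or symmetric-difference, argument between an arbitrary maximum matching and the diagonal matching restricted to $\mathcal W$. It would show that unmatched $\mathcal W$ vertices can be absorbed without lowering the cardinality, mirroring the argument in the proof of Theorem~2 in \cite{wou91}.
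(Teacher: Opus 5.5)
Your proof is correct and follows essentially the route the paper takes by deferring to the proof of Theorem~2 in \cite{wou91}: reduce to $M$ via $\rank M=n+\rank\Pi$, get the lower bound from a non-cancelling term built from the disjoint paths (taken simple, which loses nothing) completed by the structurally nonzero diagonal of $P$, and get the converse by reading disjoint paths off a path--cycle decomposition of a full term. The saturation step you flag does close by the symmetric-difference argument you propose. For a maximum matching $N$ and the diagonal matching $N_P$ of $P$, a component of $N\triangle N_P$ starting at an $N$-unmatched $\mathcal W$ vertex cannot have odd length, since it would then be $N$-augmenting, so switching along it trades an endpoint in $\mathcal U\cup\mathcal Z$ for that $\mathcal W$ vertex. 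You can also bypass the step entirely: for $J\subseteq\mathcal Z$ and $I\subseteq\mathcal U$ the Schur complement gives $\det M_{\mathcal W\cup J,\,\mathcal W\cup I}=\det P\,\det\Pi_{J,I}$, so a nonzero $r\times r$ minor of $\Pi$ yields a nonzero $(n+r)$-minor of $M$ containing every $\mathcal W$ row and column, and any nonvanishing term of its expansion is the saturating matching you need.
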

\begin{remark}
The problem of finding the
maximum number of disjoint paths can be formulated as a maximum flow
problem for an associated flow graph.
\end{remark}
\begin{remark}
For the above proof to hold, we need to assume that all non-zero entries of $M$ are free variables. This assumption is not
needed for the Gröbner basis condition to follow.
\end{remark}
\subsection{Gröbner Basis Condition}
We will now 
investigate the generic rank of $\Pi$ using Gröbner basis.
We are still considering $\Pi$ to be a rational function of the free entries of
$M$. We let $D_i$ be the least common multiple
of the denominators of the $i$th row of $\Pi$. Multiply
each row of $\Pi$ with this polynomial to obtain the polynomial
matrix $\tilde{\Pi}$. Then the generic rank of $\Pi$, when 
well-posedness holds, is the same as the generic rank of 
$\tilde{\Pi}$ when $D_i\neq 0$. Now, define the 
ideal $\mathcal I_k$ generated by all the minors of $\tilde{\Pi}$ of 
dimension $k$ together with the polynomial $1-tD$, where 
$D=\det(P)\prod_i D_i$ and $t$ is an additional variable.
Any additional denominator polynomials required by rational constraints or
parameterizations are also included as factors in $D$. Notice
that the equation $1-t D=0$ always has a solution where
$D\neq 0$ and $t=1/D$. All rank statements and solution sets below are
understood on the admissible domain $D\neq 0$.
From this the following theorem follows.
\begin{theorem}
Given the ideal $\mathcal I_k$ defined above it holds:
\begin{enumerate}
    \item If the reduced Gröbner basis of $\mathcal I_k$ is one, then 
    $\rank \Pi\geq k$.
    \item If the reduced Gröbner basis of $\mathcal I_k$, excluding the basis depending on $t$,  is zero, then 
    $\rank\Pi<k$.
    \item Otherwise $\rank_g\Pi\geq k$. The polynomials in the Gröbner basis
    that do not involve $t$, together with the restriction $D\neq 0$, define
    the set of entries of $M$ for which $\rank\Pi<k$. In particular, choosing
    $k=\rank_g\Pi$ gives the exceptional set where the rank is below its
    generic value.
\end{enumerate}
\end{theorem}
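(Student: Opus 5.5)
The three claims all come from the weak Nullstellensatz together with the Rabinowitsch trick encoded by the extra polynomial $1-tD$. Work in the polynomial ring $\mathbb{K}[x,t]$, where $x$ collects the free entries of $M$ and $\mathbb{K}$ is the coefficient field (or its algebraic closure, since Gröbner bases are unchanged by field extension).

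The variety of $\mathcal I_k$ is
$$V(\mathcal I_k)=\{(x,t)\mid D(x)\neq 0,\ t=1/D(x),\ \text{all }k\times k\text{ minors of }\tilde\Pi(x)\text{ vanish}\}.$$
Projecting to $x$ gives a bijection onto the admissible points where $\rank\tilde\Pi(x)<k$. On $D\neq 0$ each $D_i\neq 0$, so $\tilde\Pi=\mathrm{diag}(D_i)\Pi$ has the same rank as $\Pi$. Hence this projection is exactly $E_k=\{x\mid D(x)\neq 0,\ \rank\Pi(x)<k\}$.

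For item 1, a reduced Gröbner basis equal to $\{1\}$ means $1\in\mathcal I_k$. By the weak Nullstellensatz, $V(\mathcal I_k)=\emptyset$, so $E_k=\emptyset$ and $\rank\Pi\ge k$ at every admissible point.

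For item 2, I interpret the hypothesis as follows: apart from the element built from $1-tD$, the elimination ideal $\mathcal I_k\cap\mathbb{K}[x]$ is zero. I will use an elimination order with $t$ greater than all $x$-variables, so that the basis elements free of $t$ generate this elimination ideal (Elimination Theorem, \cite{CoxLittleOShea1997}). By the Closure Theorem, $V(\mathcal I_k\cap\mathbb{K}[x])$ is the Zariski closure of the projection $E_k$. If the elimination ideal is $0$, then $E_k$ is Zariski dense in affine space. Next I argue that $E_k$ is open and dense in the admissible set. It is a constructible set whose closure is the whole space, so it contains a dense open subset, i.e.\ it holds for almost all $x$. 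It remains to pass from almost all $x$ to all admissible $x$, so that $\rank\Pi<k$ identically. The minors of $\tilde\Pi$ are polynomials vanishing on a dense set, hence they vanish identically. Therefore $\rank\Pi<k$ for all admissible $M$, and in particular $\rank_g\Pi<k$.

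For item 3, the elimination ideal $J=\mathcal I_k\cap\mathbb{K}[x]$ is nonzero. Pick $0\neq g\in J$. Then every $x\in E_k$ satisfies $g(x)=0$, so $E_k$ lies in the proper hypersurface $V(g)$. By the genericity discussion preceding the graphical condition, this means $\rank\Pi\ge k$ off a proper variety, so $\rank_g\Pi\ge k$. Moreover, by the elimination and extension argument, $E_k$ equals $V(J)\cap\{D\neq 0\}$. The inclusion $\subseteq$ is clear. For the reverse inclusion, any $x\in V(J)$ with $D(x)\neq 0$ extends uniquely to $t=1/D(x)$. I also need this extension to satisfy the minors. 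That follows because $\mathcal I_k$ localized at $D$ satisfies $\mathbb{K}[x]_D\cap\mathcal I_k\mathbb{K}[x,t]=\mathcal I_k'\mathbb{K}[x]_D$, where $\mathcal I_k'$ is the ideal of the minors. Hence the minors lie in $\sqrt{J}$ after localizing at $D$, and so they vanish at $x$. This gives the description of the exceptional set. Taking $k=\rank_g\Pi$ identifies the locus where the rank drops below its generic value.

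The main obstacle is this last extension step, i.e.\ showing that the $t$-free basis elements cut out exactly $E_k$ on $D\neq 0$ rather than just its closure. I would settle it with the localization identity $\mathcal I_k\cap\mathbb{K}[x]=\{f\mid D^N f\in\mathcal I_k'\text{ for some }N\}$, the standard saturation description from \cite{CoxLittleOShea1997}. With that identity, points with $D\neq0$ in $V(J)$ are exactly the admissible points in $V(\mathcal I_k')$.
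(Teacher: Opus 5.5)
Your proposal is correct and follows the route the paper implicitly relies on: the Rabinowitsch variable $t$ with $1-tD$, lexicographic elimination with $t$ largest, and identifying the projection of $V(\mathcal I_k)$ with the admissible rank-deficient set. One simplification: every $k\times k$ minor of $\tilde\Pi$ is a $t$-free element of $\mathcal I_k$ and so lies in $J=\mathcal I_k\cap\mathbb{K}[x]$, which makes the reverse inclusion you flag as the main obstacle, and item 2 itself, immediate without the Closure Theorem or the saturation identity (and item 1 needs only that an ideal containing $1$ has no zeros, not the Nullstellensatz).
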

Without the restriction $D\neq 0$, the elimination ideal defines the algebraic
closure of the admissible set where $\rank\Pi<k$.
\begin{remark}
The lexicographic ordering should be used when the Gröbner basis is computed, and the variable
$t$ should be the largest variable according to that ordering. 
\end{remark}
\begin{remark}
Notice how known entries of $M$ are seamlessly treated since we work over the field of
rational functions of transfer functions.
\end{remark}
\begin{remark}
In case some of the entries of $M$ are constrained by rational equations,
we first clear their denominators and add the resulting polynomial equations
to the ideal, including those denominators as factors in $D$.
With algebraic constraints, the zero-ideal test in the theorem is interpreted
in the coordinate ring of the relevant reduced irreducible component.
Reducible constraint sets are treated componentwise, considering only
components that meet the admissible domain $D\neq 0$.
Specifically, one may consider constraining some entries to be the same.
However, then it is easier to just remove the duplicate variable to start with. 
\end{remark}
Notice that the generalizations in the remarks above are trivial to include
in the Gröbner basis condition. This is not so easy to do in the 
graphical condition of the previous subsection, which assumes all non-zero transfer functions to be free variables. 
\section{Identifiability}\label{sec:ident}
As was discussed in Section~\ref{sec:innovation} the closed loop 
transfer function can be recovered from the predictor using 
\eqref{eqn:predtogc}, which we can uniquely do under the assumption of
informativity. 
The next question that arises is whether we can recover parts of $M$ from $G_c$. From
\eqref{eqn:closed-tf}, it holds that 
$$G_c=R_yP^{-1}Q_r+S_{yr}$$
which is a rational expression  in the free entries of 
$(P,Q_r,R_y,S_{yr})$, which we will denote by the $k$-dimensional vector $X$. 
Hence, a relevant question is when 
a rational equation has a unique solution. 
However, in many cases, this is too 
much to ask for. What is a more relevant question is whether there is a
finite number of isolated solutions for a given $G_c$.  
To investigate this question, let $F$ be a function defined by the rule
$$F(X)=R_yP^{-1}Q_r+S_{yr}.$$
We keep $z$ symbolic and regard $F$ as a rational map over the field
$K=\mathbb C(z)$. Known rational transfer functions are included in its
coefficients. For the algebraic analysis, we consider varieties and fibers over
an algebraic closure of $K$. We let $V_o$ be the open subset of affine
$k$-space on which $P$ is invertible and all expressions defining $F$ are
well-defined. Admissible rational transfer-function networks correspond to
points of $V_o$ whose coordinates belong to $K$. For these points, equality
under $F$ means equality of rational functions, not equality at a fixed frequency.
We then define the set of {\it admissible} $G_c$ as $V_c=F(V_o)$. 
If we take $V_c$ to be the co-domain of $F$, it follows that $F$ is a surjective
function. The image $V_c$ need not itself be an affine variety. We use
$\dim(V_c)$ to denote the dimension of its algebraic closure, i.e.\ the smallest
affine variety containing $V_c$. Here and below, a property holds generically
if it holds outside a proper algebraic subset of the relevant irreducible
variety. We assume that $V_o$ is irreducible; this holds for the unconstrained
open subset of affine $k$-space considered above.

Unique solution and finite number of isolated solutions parallels the discussion
in \cite{9303890} regarding {\it generic network identifiability} and
{\it generic local network
identifiability}, where it is argued that the latter is easier to investigate than 
the former. There it is also conjectured 
that generic local network identifiability implies
generic network identifiability. 
\begin{definition}
We say that we have generic local network identifiability in the algebraic
sense if, for generic $G_c\in V_c$, the fiber defined by
\BEQ\label{eqn:open-closed}
F(X)=G_c,\qquad X\in V_o,
\EEQ
has dimension zero. We use this algebraic meaning of generic local network
identifiability throughout the following sections.

\end{definition}
\begin{remark}
A zero-dimensional fiber consists of finitely many isolated algebraic
solutions. Thus, generic local network identifiability allows a finite
ambiguity, whereas generic network identifiability requires uniqueness.
In particular, there are only finitely many admissible rational
transfer-function solutions. The algebraic solutions need not all be rational
transfer functions, and any additional requirements such as properness or
stability must also be checked when interpreting an alternative solution as an
admissible network. Legat and Hendrickx \cite{9303890} formulate local
identifiability using complex transfer-function values at a single frequency
$z$ and Euclidean neighborhoods. Here, we instead keep $z$ symbolic and use
algebraic fibers, without requiring a norm on transfer functions.

\end{remark}
Then we have the following theorem. 
\begin{theorem}
Under the irreducibility assumption on $V_o$, $X$ is generically
locally network identifiable in the algebraic sense from
\eqref{eqn:open-closed} if and only if 
$$\dim(V_o)=\dim(V_c).$$
\end{theorem}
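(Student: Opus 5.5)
The plan is to invoke the fiber dimension theorem for dominant morphisms of irreducible varieties, e.g.\ \cite{CoxLittleOShea1997}. First, I would regard $F$ as a morphism $F\colon V_o\to \overline{V_c}$ over the algebraic closure $\overline K$ of $K=\mathbb C(z)$. Here $\overline{V_c}$ is the Zariski closure of $V_c$. Since $F$ is a rational map that is regular on $V_o$, namely a polynomial divided by powers of $\det P$ and the other admissibility denominators, it is a morphism. Because $V_o$ is irreducible, $\overline{V_c}$ is irreducible: a decomposition of $\overline{V_c}$ would pull back to a decomposition of $V_o$. By construction $F$ is dominant onto $\overline{V_c}$.

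Second, I would apply the theorem on the dimension of fibers. For a dominant morphism $F\colon V_o\to Y$ of irreducible varieties, there is a nonempty Zariski-open $U\subseteq Y$ such that for every $y\in U\cap F(V_o)$, every irreducible component of $F^{-1}(y)$ has dimension exactly $\dim V_o-\dim Y$. Furthermore, $F(V_o)$ contains a dense open subset of $Y$ by Chevalley's theorem, so $U\cap V_c$ is dense in $V_c$. With $Y=\overline{V_c}$ and $\dim(V_c):=\dim\overline{V_c}$, this shows that for generic $G_c\in V_c$ the fiber $\{X\in V_o: F(X)=G_c\}$ is nonempty and has pure dimension $\dim(V_o)-\dim(V_c)$.

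Third, both directions follow at once. If $\dim(V_o)=\dim(V_c)$, the generic fiber has dimension $0$, which is exactly generic local network identifiability in the algebraic sense. Conversely, if $\dim(V_o)>\dim(V_c)$ (the reverse inequality cannot occur for a dominant map), then the generic fiber has positive dimension, so identifiability fails. I would add a remark that $V_o$ is open in affine $k$-space, so $\dim V_o=k$, and that "generic" is meant in the sense fixed in the paper: outside a proper algebraic subset of $\overline{V_c}$, intersected with $V_c$.

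The main obstacle is handling the base field and the genericity notions carefully. The fiber dimension theorem is applied over the algebraically closed field $\overline K$, but admissible networks are $K$-rational points. I would therefore state explicitly, as the definition does, that fibers are taken over $\overline K$. The genericity statement then concerns points of $V_c$ as a subset of $\overline{V_c}(\overline K)$, and the proper closed exceptional set produced by the theorem is what "generic" excludes. I would also check that the non-closedness of $V_c$ causes no trouble. Chevalley's theorem ensures $V_c$ is constructible and dense in its closure, so intersecting with the open set $U$ still leaves a generic subset.
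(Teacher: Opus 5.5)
Your proposal is correct and takes the same route as the paper: you apply the fiber dimension theorem to the dominant morphism $F\colon V_o\to\overline{V_c}$ to obtain $\dim(V_o)=\dim(V_c)+\dim F^{-1}(G_c)$ for generic $G_c\in V_c$, and both directions follow immediately. Your extra care (irreducibility of $\overline{V_c}$, choosing the open set inside the image via Chevalley, and working over $\overline K$) only makes explicit what the paper's short proof leaves implicit.
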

\begin{proof}
For a fixed $G_c\in V_c$ define the  fiber
$$F^{-1}( G_c)=\left\{X\in V_o
\mid F( X) =G_c\right\}.$$
It holds that generic local network identifiability is equivalent to that
$\dim\left(F^{-1}( G_c)\right)=0$ for generic $G_c\in V_c$.
Since $F$ has dense image in the algebraic closure of $V_c$,
it follows from the fiber dimension theorem in algebraic geometry that
$$\dim(V_o)=\dim(V_c)+ \dim\left(F^{-1}(G_c)\right)$$
for generic $G_c\in V_c$. The result is immediate from this.
\end{proof}
\begin{remark}
In the unconstrained case, $V_o$ is a dense open subset of affine $k$-space,
so $\dim(V_o)=k$.
We define the polynomial matrices 
$S$ and $T$ such that $T\odot F(X)=S$ where $\odot$ is the Hadamard product of
matrices. We let $D$ be the least common multiple of the entries of $T$,
enlarged to include the polynomial factors excluded from $V_o$.
Then we define the polynomial matrix
$T\odot G_c-S$, and the ideal generated by the entries of this matrix and $1-tD$. 
We then compute the Gröbner basis for this ideal using the lexicographic ordering, where
the variables in $G_c$ should be the smallest variables according to this ordering. 
The part of the basis containing only the variables in $G_c$ defines the 
smallest variety that contains $V_c$,
c.f. \cite[Ch.3.3]{cox+lit+she92}. The dimension of this variety is the
same as the dimension of $V_c$. 
Then the algorithm in \cite{udd13} can be used to compute the dimension. 
\end{remark}
\begin{remark}
In case some of the entries of $X$ are known, we just exclude them when defining $V_o$ and the variable $X$.
\end{remark}
\begin{remark}
It is possible to put rational constraints on $X$. We then use the
constrained admissible variety as $V_o$ and its actual dimension. If this
variety is reducible, the dimension criterion is applied to each irreducible
component and the closure of its image under $F$.
In case the constraint is that one of the entries is the same as another entry, then 
it is easier to just replace one variable with the other to remove the constraint
We will discuss this in an example later on in more detail. 
\end{remark}
\begin{remark}
Notice that it is possible to discuss identifiability also when not all of
the predictor is estimated uniquely. Then we just consider the closed loop 
transfer function corresponding to the part of the predictor we estimate
uniquely. Then there will be fewer equations that determine the open loop 
transfer functions, but there could still be enough equations to guarantee generic
network identifiability. We will discuss this in more detail for an 
example later on. 
\end{remark}
What has been presented above
is a generalization of the results in \cite{hen+gev+baz19} to the
case when $Q_r\neq I$. It is also a generalization of the results in
\cite{SHI2022110093} to the case when $R_y\neq I$. 
Our 
result also generalizes the results in \cite{leg25}, where it is assumed that 
$Q_r$ and $R_y$ are given 
zero-one matrices. We also consider a direct term $S_{yr}$, which other work
do not. 

Since $K=\mathbb C(z)$ has characteristic zero, generic local network
identifiability in the algebraic sense is equivalent, in the unconstrained
case, to the Jacobian of the entries of $F$ with respect to $X$ having rank
$k$ over $K(X)$. With constraints, the differential is restricted to the
tangent space of $V_o$ at a generic smooth point and must have rank
$\dim(V_o)$. Here, we focus on a Gröbner basis formulation.
\section{Sub-Networks}
So far we have discussed the identification of a whole network. Often one
is only interested in parts of a network. Without loss of generality we will
assume that the part of the network we are interested in corresponds to the 
leading entries $w_k^A$ of $w_k=(w^A_k,w^B_k)$. We then partition $M$ as
$$
\begin{aligned}
M&=\BBM P&Q_r&Q_e\\-R_y&S_{yr}&S_{ye}\\0&I&0\EBM
&=\BBM P_A&P_{AB}&Q_{Ar}&Q_{Ae}\\
P_{BA}&P_{B}&Q_{Br}&Q_{Be}\\
-R_{yA}&-R_{yB}&S_{yr}&S_{ye}\\0&0&I&0\EBM.
\end{aligned}
$$
We have
\begin{align*}
P_Aw^A_k&=\BBM -P_{AB}&Q_{Ar}\EBM\BBM w^B_k\\r_k\EBM+Q_{Ae}e_k\\
y_k&=R_{yA}w^A_k+\BBM R_{yB}&S_{yr}\EBM\BBM w^B_k\\r_k\EBM+S_{ye}e_k.
\end{align*}
Hence, we can investigate generic local network identifiability of 
network A by making the following replacements
in our previous definitions: 
\begin{align*}
P_A&\rightarrow P\\
\BBM -P_{AB}&Q_{Ar}\EBM&\rightarrow Q_r\\
R_{yA}&\rightarrow R_y\\
\BBM R_{yB}&S_{yr}\EBM&\rightarrow S_{yr}\\
w^A_k&\rightarrow w_k\\
(w^B_k,r_k)&\rightarrow r_k.
\end{align*}
We then obtain the following rational equation for investigating identifiability:
$$R_{yA} P_A^{-1}\BBM-P_{AB}&Q_{Ar}\EBM+\BBM R_{yB}&S_{yr}\EBM=G_c.$$
If there are columns in 
$$\BBM -P_{AB}&Q_{Ar}\\R_{yB}&S_{yr}\EBM$$
that are zero, then they may be removed from the rational function. Similarly any zero rows in $\BBM R_{yA}&S_{yr}\EBM$
may be removed.
When it comes to informativity, we realize that the B-part of the network 
provides excitation of the A-part of the network, and hence the whole 
network should be considered. Since $w_k^B$ is part of the 
exogenous input, we should consider 
$$
\begin{aligned}
M&=\BBM P_A&P_{AB}&Q_{Ar}&Q_{Ae}\\
P_{BA}&P_{B}&Q_{Br}&Q_{Be}\\
-R_{yA}&-R_{yB}&S_{yr}&S_{ye}\\0&-I&0&0\\0&0&I&0\EBM
\end{aligned}
$$
when we investigate informativity. However, rows of the second last
block row that correspond to the zero columns in $\BBM -P_{AB}\\R_{yB}\EBM$ discussed above should be removed.

As seen above, the exogenous signal contains also measurements of the 
node signals in the B-part of the network. We also need to estimate models
for parts of the B-network. Because of sparsity this might not be a big 
problem, since we may not have to measure all of $w^B_k$. However, this is in many cases still not desirable. 
If we instead of measuring
the node signals $w^B_k$ can measure 
$$\BBM \tilde w_k^B\\\hat w_k^B\EBM=\BBM -P_{AB}\\R_{yB}\EBM w^B_k,$$
we will not need to estimate models for any parts of the B-network. 
Then we obtain the equations:
\begin{align*}
P_Aw^A_k&=\BBM I&0& Q_{Ar}\EBM\BBM \tilde w^B_k\\\hat w_k^B\\r_k\EBM+Q_{Ae}e_k\\
y_k&=R_{yA}w^A_k+\BBM 0&I&S_{yr}\EBM\BBM \tilde w^B_k\\\hat w_k^B\\r_k\EBM+S_{ye}e_k.
\end{align*}
The replacement will instead be:
\begin{align*}
P_A&\rightarrow P\\
\BBM I&0&Q_{Ar}\EBM&\rightarrow Q_r\\
Q_{Ae}&\rightarrow Q_e\\
R_{yA}&\rightarrow R_y\\
\BBM 0&I&S_{yr}\EBM&\rightarrow S_{yr}\\
S_{ye}&\rightarrow S_{ye}\\
w^A_k&\rightarrow w_k\\
(\tilde w^B_k,\hat w^B_k,r_k)&\rightarrow r_k.
\end{align*}
We then obtain the following rational equation for investigating identifiability:
$$R_{yA} P_A^{-1}\BBM I&0&Q_{Ar}\EBM+\BBM 0&I&S_{yr}\EBM=G_c$$
which now contains dynamics of only the A-part of the network.
Any zero rows in $\BBM R_{yA}&S_{yr}\EBM$
may be removed.
If there are zero rows in 
$$\BBM -P_{AB}\\R_{yB}\EBM$$
it follows that there are zero signals in $(\tilde w_k^B,\hat w_k^B)$. The 
corresponding columns in the above rational matrix may then be removed. 
Also any zero columns in $\BBM Q_{Ar}\\S_{yr}\EBM$ may be removed. 
Since $(\tilde w_k^B,\hat w_k^B)$ is part of the 
exogenous input, we should consider 
$$
\begin{aligned}
M&=\BBM P_A&P_{AB}&Q_{Ar}&Q_{Ae}\\
P_{BA}&P_{B}&Q_{Br}&Q_{Be}\\
-R_{yA}&-R_{yB}&S_{yr}&S_{ye}\\0&P_{AB}&0&0\\0&-R_{yB}&0&0\\0&0&I&0\EBM
\end{aligned}
$$
when we investigate informativity. However, any zero rows in the fourth and
fifth block rows should be removed, since they correspond to zero signals in $(\tilde w_k^B,\hat w_k^B)$. 

Notice that the criterion for 
informativity holds for any pattern of the matrices defining the $M$-matrix. 
Based on this, we now realize that 
we are able to analyze informativity and generic local network identifiability
for sub-networks using the methods we have developed for a whole network. 
It is possible to combine the two above methods
depending on what can be measured so that only some of the columns of $P_{AB}$ and
$R_{yB}$ are estimated. 
\section{Example}\label{sec:discussion}
In this section, we will investigate the theory we have developed on some examples. 
\begin{exmp}
We will look at the very simple example in 
Figure~\ref{fig:simple} to illustrate why the 
conditions we have presented are more general than the ones 
previously presented in the literature. 
\begin{figure}[htbp]
\begin{center}
\resizebox{0.35\textwidth}{!}{\begin{tikzpicture}[auto, node distance=2cm]

\node [input, name=r1] {};
\node [sum, right of=r1, node distance=1.5cm] (sum1) {$+$};
\node [block, right of=sum1, node distance=1.5cm] (G1) {$ G^1$};
\node [sum, right of=G1, node distance=1.5cm] (sum2) {$+$};
\node [input, above of=sum2, node distance=1.5cm] (v1) {};

\node [block, below of=G1, node distance=1.5cm] (G2) {$ G^2$};
\node [sum, below of=sum2, node distance=1.5cm] (sum3) {$+$};
\node [sum, below of=sum1, node distance=1.5cm] (sum4) {$+$};
\node [input, below of=sum4, node distance=1.5cm] (v2) {};
\node [input, right of=sum3, node distance=1.5cm] (r2) {};

 \draw [-Latex] (r1) -- node {$ r^1$} (sum1);  
\draw [-Latex] (sum1) -- node {$ w^2$} (G1); 
\draw [-Latex] (G1) -- node {} (sum2);
\draw [-Latex] (v1) -- node {$ v^1$} (sum2);

 \draw [-Latex] (r2) -- node {$ r^2$} (sum3);  
\draw [-Latex] (sum3) -- node {$ w^4$} (G2); 
\draw [-Latex] (G2) -- node {} (sum4);
\draw [-Latex] (v2) -- node {$ v^2$} (sum4);

\draw [-Latex] (sum4) -- node {$ w^3$} (sum1);
\draw [-Latex] (sum2) -- node {$w^1$} (sum3);

\end{tikzpicture}}
\end{center}
\caption{Block diagram for a simple network.}
\centering
\label{fig:simple}
\end{figure}
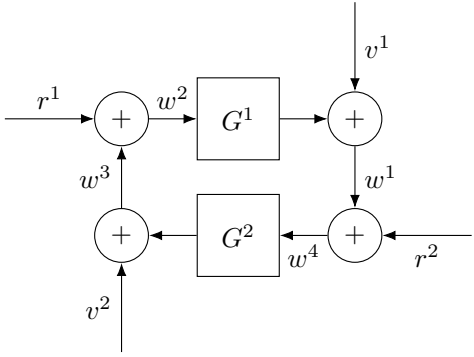
We assume that $v^1_k=H^1 e^1_k$, $v^2_k= H^2 e^2_k$ and that
$r^1_k= Q^1 r_k$ and $r^2_k= Q^2 r_k$. We assume that $H^1\not\equiv0$
and $H^2\not\equiv0$. Notice that we have
the same single $r_k$ entering the loop in two places.
We are interested in estimating
$( G^1, Q^1)$, and use a predictor for
predicting $w^1_k$. We can measure $(w^3_k,r_k)$ in 
addition to $w^1_k$, and therefore the predictor input is $z_k=(w^1_k,w^3_k,r_k)$.
We obtain the 
following equations
\BEQ\label{eqn:example}
\begin{aligned}
\BBM w^1_k\\w^2_k\\w^3_k\\w^4_k\EBM&=
\BBM 0& G^1&0&0\\0&0&1&0\\0&0&0& G^2\\
1&0&0&0\EBM
\BBM w^1_k\\w^2_k\\w^3_k\\w^4_k\EBM+
\BBM 0&H^1&0\\ Q^1&0&0\\0&0&H^2\\Q^2&0&0\EBM \BBM r_k\\e^1_k\\e^2 _k\EBM\\
z_k&=\BBM 1&0&0&0\\0&0&1&0\\0&0&0&0\EBM
\BBM w^1_k\\w^2_k\\w^3_k\\w^4_k\EBM+\BBM 0\\0\\1\EBM r_k.
\end{aligned}
\EEQ
It is straightforward to see that there are three disjoint paths:
$(\mathcal E_1,\mathcal W_1,\mathcal Z_1)$, $(\mathcal E_2,\mathcal W_3,\mathcal Z_2)$ and $(\mathcal R_1,\mathcal Z_3)$, and hence we have
informativity under the standing well-posedness and stability assumptions,
assuming that the spectrum for $(r_k,e_k^1,e_k^2)$ is positive definite for
almost all frequencies. We can also verify this by directly looking at the
transfer function 
$$\Pi =\BBM\frac{ G^1( Q^1+ G^2 Q^2)}{1- G^1 G^2}&\frac{H^1}{1- G^1 G^2}
&\frac{ G^1H^2}{1- G^1 G^2}\\
\frac{ G^2( G^1 Q^1+ Q^2)}{1- G^1 G^2}&\frac{ G^2H^1}{1- G^1 G^2}
&\frac{H^2}{1- G^1 G^2}\\
1& 0&0\EBM$$
which has determinant $H^1H^2/(1- G^1 G^2)$. Hence, both the rank
and the generic rank are full under well-posedness and the nonzero noise-filter
assumptions.

The closed loop system transfer function from $(w^3_k,r_k)$ to $w^1_k$ is
given by
$$\setlength{\arraycolsep}{3.5pt}\BBM  G_c^1& G_c^2\EBM=\BBM 1&0\EBM\BBM 1&- G^1
\\0&1\EBM^{-1}\BBM 0&0\\1& Q^1\EBM
=\BBM  G^1& G^1Q^1\EBM.$$
The dimensions of the domain and the image closure are $\dim V_c=\dim V_o=2$ and we conclude that
$$( G^1, Q^1)=
( G_c^1, G_c^2/ G_c^1),\qquad G_c^1\not\equiv0.$$
Here, nonzero means not identically zero as a rational function.
Hence, we have generic network identifiability, and therefore also generic
local network identifiability. Uniqueness fails at $G_c=(0,0)$, where $G^1=0$
and $Q^1$ is arbitrary. Thus, identifiability does not hold for every admissible
closed loop transfer function.
\end{exmp}

\begin{exmp}
Let us now consider the case where we are only interested in estimating
$G^1$ and $Q^1$ is known. Using the same measured signals, we can form
$\xi_k=w_k^3+Q^1(q)r_k$. The network equation becomes
$$w_k^1=G^1(q)\xi_k+H^1(q)e_k^1.$$
Under the assumed innovation representation, the predictor can therefore be
written as
$$\hat w_k^1=W_y(q)w_k^1+V(q)\xi_k,\qquad G^1=(1-W_y)^{-1}V.$$
Equivalently, the original predictor columns are $(W_y,V,VQ^1)$. Following
Remark~\ref{rem:predictor-reduction}, it suffices to establish informativity of
$(w_k^1,\xi_k)$, rather than positive definiteness of the spectrum of all three
original predictor inputs.

With $e^2=0$, the transfer matrix from $(r,e^1)$ to $(w^1,\xi)$ has determinant
$$-\frac{H^1(Q^1+G^2Q^2)}{1-G^1G^2}.$$
With $r=0$, the corresponding determinant for inputs $(e^1,e^2)$ is
$$\frac{H^1H^2}{1-G^1G^2}.$$
In either case, a positive-definite spectrum of the active inputs for almost all
frequencies ensures informativity when the displayed determinant is not
identically zero, under the standing well-posedness and stability assumptions.
Hence, known $Q^1$ allows reduced excitation by combining predictor inputs,
rather than simply discarding an unknown predictor column. This illustrates
how prior knowledge can reduce the excitation needed to identify $G^1$.
\end{exmp}

\begin{exmp}
We consider again the network in Figure~\ref{fig:simple}, but this time
we are interested in estimating both transfer functions $( G^1,
 G^2)$. For simplicity, we consider noise-free experiments under the standing
well-posedness and stability assumptions. We perform two separate experiments
where we first excite with only $r_k^1$ and measure $w^1_k$, and then we excite
with only $r_k^2$ and measure $w^3_k$. With the spectrum of each active input
positive for almost all frequencies, the corresponding closed loop transfer
function can be determined directly from the input-output relation. This does
not require unique determination of an unrestricted predictor. We have
$$w_k^1= G_c^1 r_k^1,\quad w_k^3= G_c^2 r_k^2$$
where
$$ G_c^1=\frac{ G^1}{1- G^1 G^2},
\quad  G_c^2=\frac{ G^2}{1- G^1 G^2}.$$
If we replace $( G^1, G^2)$
with $(-1/ G^2,-1/ G^1)$ we obtain the same 
$( G_c^1, G_c^2)$ as rational functions. Generically, these are the two
isolated solutions. Hence, this is an example where generic local
identifiability in the algebraic sense does not imply generic network identifiability.
Notice that
$$\BBM  w^1_k\\w^2_k\\w^3_k\\w^4_k\EBM=
\frac{1}{1- G^1 G^2}
\BBM  G^1& G^1 G^2\\
1& G^2\\
 G^1 G^2& G^2\\
 G^1& 1
\EBM\BBM r^1_k\\r^2_k\EBM$$
and it is not possible to pick out these two closed loop transfer functions
by multiplying from the left and the right with zero-one matrices. 
However, if we duplicate the network and constrain the transfer functions
in the two networks to be the same, it is possible to pick out them
with zero-one matrices. Thus, the conjectured implication from local
identifiability to identifiability in \cite{9303890} does not extend in general
to networks with equality constraints between repeated transfer functions.
\end{exmp}
\section{Numerical Examples}
The above examples were possible to analyze without using any of the theory
we have developed. In this section, we will show how a Matlab implementation 
based on Gröbner basis calculations can be used to analyze fairly large examples,
that are not so easy to analyze by hand. 
\begin{exmp}
Consider the dynamical network defined by $(P,Q_r,R_y,S_{yr})$, where $P=I-G$, $S_{yr}=0$ and 
\begin{align*}
G&=\BBM 0&G^{12}& 0 & 0 & 0\\
        0& 0& G^{23} & 0 &0\\
        0&0&0&G^{34}&0\\
        0&0&0&0&G^{45}\\
        G^{51}&0&0&0&0\EBM\\
Q_r&=\BBM Q^1&1&0&0&0\\
        0&Q^2&1&0&0\\
        0&0&Q^3&1&0\\
        0&0&0&Q^4&1\\
        Q^5&0&0&0&1\EBM\\
R_y&=\BBM R^1& 0 & 0& 0&0\\
        0& 1&0&0&0\\
        0&0&0&1&0\EBM.
\end{align*}
For this example, we have $\dim V_o=11$, and using Matlab, we can compute $\dim V_c=11$ in 11 minutes on a laptop
computer. Hence, we have generic local network identifiability. In a second we can verify that
the generic row rank of $\Pi$ is full for the case when $Q_e=I$ and $S_{ye}=0$,
and therefore we also have informativity generically, under the standing
assumptions, if the spectrum of $(r_k,e_k)$ is positive definite almost everywhere.
\end{exmp}
\begin{exmp}
We consider the same dynamic network as in the previous example, but now we are only interested in identifying
the transfer functions $(G^{12},Q^1,Q^2,R^1)$. We let $y_k=(R^1(q)w_k^1,w_k^2)$
and we measure $\tilde w_k^B=G^{23}w^3_k$ from the B-part of the network. We then have that
$$F(X)=\BBM R^1&0\\0&1\EBM\BBM 1&-G^{12}\\0&1\EBM^{-1}\BBM 0&Q^1&1&0\\1&0&Q^2&1\EBM$$
with the input $(\tilde w_k^B,r_k^1,r^2_k,r_k^3)$ and the output $y_k$. The $M$-matrix for investigating
informativity is
$$M=
\BBM
P&Q_r&Q_e\\
-R_y&S_{yr}&S_{ye}\\
-\tilde P&0&0\\
0&I&0\EBM$$
where $P$ and $Q_r$ are as in the previous example, $Q_e=I$, $S_{yr}=0$, $S_{ye}=0$,and 
$$R_y=\BBM R^1& 0 & 0& 0&0\\
        0& 1&0&0&0\EBM,\quad \tilde P=\BBM 0&0&G^{23}&0&0\EBM.$$
Notice that the zero matrices have one fewer row as compared to the previous example. In about a second we can
verify that we have generic local network identifiability and generic informativity.
Here and in the reduced-excitation cases below, informativity is understood under
the standing assumptions and a positive-definite spectrum of the remaining active
external signals for almost all frequencies.
We can also investigate if $(r_k^4,r_k^5)$ can be equal to zero. We remove the
columns corresponding to $r_k^4,r_k^5$ and the corresponding measurement rows
in the bottom block row of $M$, and we obtain that we still have generic
informativity. It is also possible to investigate if we in addition can have
$r_k^3$ equal to zero. We then remove its input column and corresponding
measurement row from $M$. We also investigate generic local network
identifiability after removing the last column in $F(X)$. It turns out that
we have both generic informativity and generic local network identifiability
also for this case.

\end{exmp}
\section{Conclusions}
In this paper, we have shown how to investigate informativity and 
generic local network identifiability in the algebraic sense for networks of dynamical systems.
In the free-entry setting, our sufficient conditions for generic informativity
are characterized in terms of vertex-disjoint paths of a graph 
associated with the dynamic network. Informativity can also be cast as an
algebraic geometry problem where Gröbner bases are computed. 
The generic local network identifiability
result is phrased as the admissible domain and the closure of its image having the
same dimension. This can also be investigated using Gröbner bases. 
Our approach seamlessly generalizes to the case where some transfer
functions are known. This is in contrast to the work presented in
e.g. \cite{SHI2022110093}. Actually, our framework takes advantage of known 
transfer functions in the computations of the Gröbner basis in that
computation time can be reduced.
Future work will focus on investigating the interplay
between informativity and generic local network identifiability in more detail.

\bibliographystyle{IEEEtran}        

%
\end{document}